\documentclass[twocolumn]{revtex4-2}
% Package(s) to include
\usepackage{amsfonts,amssymb,amsmath}            % for math symbols.
\usepackage{lmodern}
\usepackage[]{graphics,graphicx}            % for graphics figures.
\usepackage{amsthm}
\usepackage{bbold,enumitem,mathtools}
\usepackage{tikz}
\usetikzlibrary{decorations.pathmorphing,patterns,decorations.markings,matrix,quantikz}
%\usepackage{listofitems}
%\usepackage{adjustbox}
%\usepackage{nicematrix}
%\usepackage{xstring}
%\usepackage{ifthen}
%\usepackage{float}
%\usepackage{listings}% http://ctan.org/pkg/listings
%\lstset{
%  basicstyle=\ttfamily,
%  mathescape
%}

%\renewcommand{\familydefault}{ppl}
%\renewcommand{\baselinestretch}{1.0}
%\bibliographystyle{../../../apsrev_ASK}
\usepackage{url}
\usepackage{natbib}
\usepackage{hyperref}
\usepackage[english]{babel}
\makeatletter
% A change to a babel macro
\def\bbl@set@language#1{%
  \edef\languagename{%
    \ifnum\escapechar=\expandafter`\string#1\@empty
    \else\string#1\@empty\fi}%
  %%%% ADDITION
  \@ifundefined{babel@language@alias@\languagename}{}{%
    \edef\languagename{\@nameuse{babel@language@alias@\languagename}}%
  }%
  %%%% END ADDITION
  \select@language{\languagename}%
  \expandafter\ifx\csname date\languagename\endcsname\relax\else
    \if@filesw
      \protected@write\@auxout{}{\string\select@language{\languagename}}%
      \bbl@for\bbl@tempa\BabelContentsFiles{%
        \addtocontents{\bbl@tempa}{\xstring\select@language{\languagename}}}%
      \bbl@usehooks{write}{}%
    \fi
  \fi}
% The user interface
\newcommand{\DeclareLanguageAlias}[2]{%
  \global\@namedef{babel@language@alias@#1}{#2}%
}
\makeatother

\DeclareLanguageAlias{en}{english}

\newtheorem{theorem}{Theorem}

\newtheorem{definition}{Definition}
\newtheorem{lemma}{Lemma}

\newtheorem{assumption}{Assumption}

\newcommand{\identity}{\mathbb{1}}
\renewcommand{\epsilon}{\varepsilon}

\begin{document}

\title{The Smallest Code with Transversal \texorpdfstring{$T$}{T}}
\date{\today}
\author{Stergios \surname{Koutsioumpas}}
%\affiliation{Royal Holloway University of London, Egham, Surrey, TW20 0EX, UK}
\author{Darren \surname{Banfield}}
%\affiliation{Royal Holloway University of London, Egham, Surrey, TW20 0EX, UK}
\author{Alastair \surname{Kay}}
\affiliation{Royal Holloway University of London, Egham, Surrey, TW20 0EX, UK}
\email{Stergios.Koutsioumpas.2018@live.rhul.ac.uk}
\begin{abstract}
We prove that the smallest distance $3$ Quantum Error Correcting Code with a transversal gate outside the Clifford group is the well-known 15-qubit Reed-Muller code, also known as a tri-orthogonal code. Our result relies on fewer assumptions than previous works. We further extend this result by finding the minimal code that also possesses any other non Clifford transversal single-qubit gate.
\end{abstract}
\maketitle

%some very general introduction (write this last!)
Quantum Computation is receiving a huge focus of international research efforts at the moment, with practical devices being on the verge of achieving ``quantum supremacy'', in which a calculation is performed which has no reasonable hope of being implemented using existing classical algorithms on even the most powerful supercomputers \cite{preskill2012b, arute2019a, madsen2022a}. There have been many suggested quantum algorithms which would provide a substantial speed up compared to known classical ones already \cite{shor1997,bernstein1997}.

The main problems current approaches are facing are the noise from their environment and limited physical resources. Every application of a quantum gate can introduce errors which may cause the computation to fail \cite{aharonov1996}. The aim for the future is sufficient physical resources to implement Fault Tolerant Quantum Computing, which carefully choreographs gate sequences ensuring that any error does not propagate catastrophically and eventually results in Quantum Error Correction \cite{calderbank1996}. 

Thanks to the Quantum Threshold Theorem \cite{aliferis2006}, we know that we can achieve arbitrary accuracy in any calculation provided that the error rate of the components of the computer is below a certain threshold. Hence, the scope for near-term quantum computation on a large scale is heavily dependant upon both the fault tolerant threshold $\varepsilon$ and the resources required to realise the error correcting scheme. A worldwide goal, therefore, is to make the threshold as large as possible. The first rigorous estimates of $\varepsilon=2.73\times10^{-5}$ were made in \cite{aliferis2006}, and have been revised many times, with current state of the art being about $\varepsilon=6.0\times10^{-3}$ \cite{fowler2009}. Beyond these results, however, one can also consider the scaling of resources required to operate close to the threshold obtaining higher estimates such as \cite{knill2005}, as well as any corresponding trade-offs since different error correcting schemes, with new gate sets, and hence different synthesis options \cite{kliuchnikov2015,kliuchnikov2013}, become possible.

There are certain insights that we can derive from \cite{aliferis2006} about how to optimise the threshold:
\begin{itemize}
\item Use as few qubits as possible.
\item Use as simple a circuit for each logical gate as possible with few interactions between qubits.
\end{itemize}
It is also helpful to include more operations than the minimum required for a universal gate set in order to obtain a denser subset of all gates and hence shorter approximation sequences.
 
There is tension between these different constraints. For example, the perfect quantum code \cite{laflamme1996a} uses only five physical qubits per logical qubit, but the gate sequences for a universal set are complicated and have a lot of interactions between qubits, leading to costly propagation of errors. Conventional wisdom has converged on the solution of employing the Steane $[[7,1,3]]$ code. This is the smallest CSS code, and all but one of its gates can be implemented in the simplest way possible,  transversally. Moreover, the Eastin-Knill theorem \cite{eastin2009} tells us that it is impossible to implement all gates from a universal set transversally, hence the Steane code has the minimum amount of non-transversal gates possible for a universal gate set encoding. Another result from the Eastin-Knill theorem is that any error correcting code will only be able to transversally implement a finite set of gates, which are traditionally considered to be the Clifford gates $H, c-\textsc{not}, Z, X, S$. We then require a gate outside this set to achieve universality \cite{barenco1995a}. Potential major contributors to the fault-tolerant threshold are then the implementation of the one non-transversal gate, the $\pi/8$ phase or $T$ gate, which is achieved via magic state distillation (and hence is also very resource intensive) or the c-\textsc{not} gate by nature of being a $2$ qubit gate.

We intend to initiate a review of the alternatives, reconsidering the trade-offs between the listed insights. In particular, the $T$ gate implementation is extremely costly. That said, it could be worse. The gate is at the third order $C_3$ of the Gottesman-Chuang hierarchy \cite{gottesman1999} which is the minimum required as $C_2$ are the Clifford gates. Gates higher in the hierarchy would be even more difficult to implement due to the requirement of extra ancillas. It does, however, open the question of whether we could find a code where the only non-transversal gate is a single-qubit gate at the second order of the Gottesman-Chuang hierarchy. Is any corresponding expansion in the number of qubits in the code compensated for by ease of implementation of the gates, and perhaps having more gates from which to synthesise circuits? One such example was discussed in \cite{knill2005}, but is this the smallest code that could be used, or are there smaller ones, which would lead to a higher threshold?

In this paper, we will prove that the smallest distance $d\geq 3$ error correcting code that possesses transversal $T$ is the Reed-Muller $[[15,1,3]]$ code. Such proofs have previously appeared in the literature \cite{haah2018}\cite{rengaswamy2020}, but have made a variety of restrictive assumptions that we will eliminate, including restricting to non-degenerate CSS codes, and making a stronger assumption about the nature of transversal gates. Knowing the smallest possible code puts one in a good position to proceed with the rest of the review of the fault-tolerant architecture.

Section \ref{sec:notation} begins by providing an overview of the notation and terms used throughout the paper. In section \ref{sec:korthog} we introduce a particular class of error correcting codes, the $k$-orthogonal codes and in \ref{subsec:smallestkorthog} we prove a lower bound for their size. In section \ref{subec:subdual}, we show that the $[2^{k+1}-1,2^{k+1}-2-k,3]$ classical Hamming codes are $k$-orthogonal and achieve the lower bound outlined in \ref{subsec:smallestkorthog}. We then provide a construction of CSS codes based on the Hamming codes which we refer to as sub-dual. We show that these codes transversally implement $P(2\pi/2^{k-1})$ gates. In section \ref{sec:assumptions}, we clearly list and provide the motivation behind our assumptions for the final part of the proof. Finally, in section \ref{sec:smallestphase} we use these assumptions to prove that the minimum size of codes with transversal $P(2\pi/2^{k-1})$ gates is the same as the lower bound of triorthogonal codes and is achieved by the Hamming sub-dual code construction for each $k$.

\section{Notation and definitions}\label{sec:notation} %I'm sure there's a better name for this
In this paper, we will primarily be interested in the action of phase gates on logical qubits. Since those phase gates are diagonal with respect to some computational basis, we focus on their actions on basis states. We have introduced some notation that befits manipulating these basis states, as well as a standard form for codes to make later definitions clear.   
We define the dot product between two binary strings $x,z \in \{0,1\}^n$ as: 
$$
x\cdot z:=(x_1z_1,x_2z_2,\ldots,x_nz_n),
$$
which is equivalent to identifying the bits where both $x$ and $z$ are 1, and
$$
|x|=\sum_{i=1}^nx_i
$$
is the weight of the bit string.

We will use the notation
$$
P({\theta})=\left(\begin{array}{cc} 1 & 0 \\ 0 & e^{i\theta} \end{array}\right)
$$
to define a phase gate. Hence, $P({\theta})^{\otimes n}\ket{x}=e^{i\theta |x|}\ket{x}$

\begin{definition}[Standard Form]
	For a stabilizer code, we can express the stabilizers (up to some possible $\pm i$ phases) as tensor products of $X$ and $Z$ operators, and thus these can be represented by a generator matrix $A$ of $2n$ columns, the first $n$ describing locations of $X$ operations, and the second $n$ describing locations of $Z$ operations. We define a standard form
	$$
	A=\begin{bmatrix}
		A_X & B \\
		0 & A_Z
	\end{bmatrix},
	$$
	where we have redefined the generators via linear combinations in order to reduce the rows that contain $X$s to the minimum possible, i.e. the matrix $A_X$, of $m$ rows, is full rank (modulo 2).
	
	In standard form, $Z_L=Z_r$, $X_L=X_s$ for some binary strings $r$ and $s$ and all the stabilizers corresponding to rows of $A_z$ are all of the form $Z_z$ (as compared to $-Z_z$) for $z\in A_z$.
\end{definition}

In the case of CSS codes, $B=0$. 

In \cite{zeng2011,anderson2016}, it is stated that all single-qubit transversal operations are of the form
$$
L\left(\bigotimes_{i=1}^nP(\theta_i)\right)R^\dagger,
$$ 
where $L$ and $R$ are qubit-wise Clifford operators. There is a lot of freedom to define new codes with identical properties. In particular, if there is a code with the above logical gate, there is an equivalent code with the logical gate:
$$
\bigotimes_{i=1}^nP(\theta_i).
$$
We choose to call this gate a logical phase gate (which automatically implies that $Z_L=Z_r$). There remains further freedom to redefine the code within the $X/Y$ plane. This means that (i) we can choose an equivalent code with $X_L=X_s$, and (ii) we can ensure that all the $Z$-type stabilizers specified by $A_z$ are all of the form $Z_z$ (as compared to $-Z_z$). This second item is assured by finding an $X_y$ which anti-commutes with all the terms $-Z_z$ and commutes with all others. If we pre- and post-multiply by this term, the signs of the stabilizers are correctly updated. In terms of the logical operator, it would only change $\theta_i\mapsto-\theta_i$ on some sites, and hence the general form we are using is not affected.

\subsection{Transversal Gates}\label{subsec:transgates}
We will now carefully look into the definitions of Transversality. When we define an error correcting code, we must also specify how to fault tolerantly implement a set of gates on the logical qubits. Fault tolerance is the property that any computation can be made arbitrarily accurate, given that the physical error rates are below a certain threshold. For distance $3$ error correcting codes, fault tolerance is guaranteed by circuits where any fault in a physical qubit only leads to at most one error in each encoded block of qubits \cite{gottesman1997}. The simplest constructions are the transversal implementations. 

\begin{definition}
A logical single qubit unitary is implemented in a \emph{transversal} manner if it is implemented by individual operations on each qubit $i$. For multi-qubit logical gates, the transversal implementation means that unitaries must be applied only on sets of qubits (from different logical qubits) with the same label, e.g. an operation on all the physical qubits labelled `1'.
\end{definition}

%It has been proven \cite{zeng2011,anderson2016} that any single-qubit transversal gate can be written in terms of Clifford operations plus a phase gate on each qubit. To that end, we will focus on the component that is a phase gate on each qubit. If this realises a logical gate, it must be a phase gate. Thus, we will be interested in categorising the transversal phase gates that can be constructed in this manner.

\begin{definition}
We say that a gate is \emph{strongly transversal} if the operation on each set of identically labelled qubits is the same for each and every label.
\end{definition}
For example, $S\otimes I \otimes T$ is a transversal but not strongly transversal operation while $T\otimes T\otimes T$ is strongly transversal on a code of $3$ qubits.

The strongly transversal assumption is often made (some papers, such as \cite{rengaswamy2020}, partially remove the assumption in the specific case of transversal $T$). We will avoid this assumption.

Note that we will assume that the logical $X$ and $Z$ operators are transversal, but not that they are strongly transversal. We will denote them by $X_L=X_s$, where $s\in\{0,1\}^n$ is the binary string which has $0$ in every position where the identity is applied on a physical qubit and $1$ where $X$ is applied.  

With these definitions in mind, we can look into a particular class of classical error correcting codes.

\section{\texorpdfstring{$k$}{k}-orthogonal Codes}\label{sec:korthog}
In this section we introduce the classical $k$-orthogonal codes and find some bounds on their size. In later sections, we show that these codes can be used to define quantum codes and are tied to the transversal implementation of any $P(2\pi/2^{k-1})$ phase gate.
\begin{definition}
	We say that a code is $k$-orthogonal if:
	\begin{align*}
		|x^1|&\equiv 0\text{ mod }2 \\
		|x^1\cdot x^2|&\equiv 0\text{ mod }2 \\
		|x^1\cdot x^2\cdot x^3|&\equiv 0\text{ mod }2 \\
		&\vdots \\
		|x^1\cdot x^2\cdot x^3\cdot\ldots\cdot x^k|&\equiv 0\text{ mod }2 \\
	\end{align*}
	for all $x^i\in S_X$, which is the group of strings generated by $A_X$ under addition modulo 2. Unless otherwise stated, the code is implied to be distance $d\geq 3$ and non-degenerate.
\end{definition}

\begin{lemma}\label{lem:korthogonal}
All $k$-orthogonal codes (that are non-degenerate and have distance $3$) have a parity check matrix $A_X$ with $m$ rows such that $m>k$.
\end{lemma}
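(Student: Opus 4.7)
My plan is to argue by contradiction: assume $m\leq k$, and deduce from the $k$-orthogonality constraints that the code must be empty.

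First, view each column of $A_X$ as a vector in $\{0,1\}^m$, and for each non-empty $T\subseteq\{1,\ldots,m\}$ let $N_T$ denote the number of columns whose entries at the positions indexed by $T$ are all $1$. If $r_{i_1},\ldots,r_{i_{|T|}}$ are the rows of $A_X$ indexed by $T$, then $N_T=|r_{i_1}\cdot r_{i_2}\cdots r_{i_{|T|}}|$, because the pointwise product has a $1$ in column $\ell$ exactly when the restriction of that column to $T$ is all-ones. Under the assumption $m\leq k$, every non-empty $T$ satisfies $|T|\leq k$, so the $k$-orthogonality hypothesis applied to the generators indexed by $T$ forces $N_T$ to be even for every non-empty $T$.

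Second, I would invoke distance $3$ and non-degeneracy to show that the columns of $A_X$ are distinct and all non-zero. A zero column at position $i$, or equal columns at positions $i\neq j$, would make $Z_i$ or $Z_iZ_j$ commute with every stabilizer generator (commutation with $Z$-type generators is automatic, while commutation with the $X$-type block of the standard form follows directly from the column condition). Such a weight-$1$ or weight-$2$ $Z$-operator would then be either a stabilizer below the distance (contradicting non-degeneracy) or a non-trivial logical below the distance (contradicting $d\geq 3$). Hence $a_v:=|\{\ell:c_\ell=v\}|\in\{0,1\}$ for every $v\in\{0,1\}^m$, and $a_0=0$.

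The final step is Mobius inversion on the Boolean lattice: since $N_T=\sum_{v:\mathrm{supp}(v)\supseteq T}a_v$, for every $v\neq 0$ with support $S$ we obtain
$$a_v=\sum_{T\supseteq S}(-1)^{|T|-|S|}N_T.$$
Every $T$ in this sum contains the non-empty set $S$, so every $N_T$ is even and $a_v$ is even. Combined with $a_v\in\{0,1\}$, this forces $a_v=0$, and together with $a_0=0$ yields $n=\sum_v a_v=0$, the required contradiction. I expect the main obstacle to be the distance/non-degeneracy step in the general (non-CSS) standard-form setting: the $B$ block is not zero, so one has to be careful that $Z$-only operators really do commute with the $X$-type generators precisely when the relevant columns of $A_X$ vanish or coincide. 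Once that is in hand, the parity counting and Mobius inversion are essentially forced.
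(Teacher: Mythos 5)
Your proof is correct, but it takes a genuinely different route from the paper's. The paper argues directly: it picks an arbitrary column $q$, uses a row $y$ with $y_q=1$ (which exists by distance $3$) to row-reduce $A_X$ so that column $q$ becomes all-ones, invokes non-degeneracy to assert it is the \emph{unique} all-ones column, and therefore exhibits $m$ elements of $S_X$ whose $m$-fold pointwise product has weight exactly $1$ --- an explicitly violated $m$-orthogonality constraint, whence $m>k$. Your proof is instead a global counting argument by contradiction: you keep the generating rows fixed, observe that under $m\le k$ every intersection count $N_T$ for nonempty $T$ is forced even, and apply M\"obius inversion on the Boolean lattice to conclude every $a_v$ with $v\ne 0$ vanishes, so the code has no columns at all. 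Both proofs hinge on the same structural fact derived from distance $3$ and non-degeneracy --- that the columns of $A_X$ are nonzero and pairwise distinct (and your remark about the $B$ block is right: commutation of a $Z$-only operator with an $X_xZ_y$ generator is governed solely by the symplectic overlap with $x$, so only $A_X$'s columns matter). The paper's row-reduction is shorter and constructive (it produces the offending product); your inclusion--exclusion route is slightly longer but avoids changing the generating set and makes it transparent that the constraints over-determine the column multiplicities, which arguably gives a cleaner picture of why the bound is tight.
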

\begin{proof}
Let $N_q$ be the number of columns of $A_X$ for which the first $q$ rows contain at least one non-zero entry. So, $N_1=|x^1|$. Evaluating $N_2$, we can find $|x^1|$ and $|x^2|$ and subtract the number of sites that are double counted: $|x^1\cdot x^2|$. Hence,
$$
N_2=|x^1|+|x^2|-|x^1\cdot x^2|.
$$
Similarly,
$$
N_3=|x^1|+|x^2|+|x^3|-|x^1\cdot x^2|-|x^1\cdot x^3|-|x^2\cdot x^3|+|x^1\cdot x^2\cdot x^3|
$$
In general, $N_i$ depends on a single $i$-product and other lower order products.

Now select any column $q$. We can update our choices of rows in the generator $A_X$ by taking linear combinations of existing rows. In particular, pick any row $y\in A_X$ for which $y_q=1$. Since the code is distance 3, we are guaranteed that at least one row satisfies this (because it yields a syndrome when a $Z_q$ error occurs). Now replace all $x\in A_X:x_q=0$ with $y\oplus x$. This means that when we write out the binary matrix $A_X$, the column $q$ is the all ones column. Since the code is non-degenerate, it must be the \emph{only} column which is all-ones. Hence $x_1\cdot x_2\cdot x_3\ldots x_m=000\ldots 0100\ldots 0$, where the dot product is taken over all rows and the 1 is in the position $q$. Consequently, $|x_1\cdot x_2\cdot x_3\ldots x_m|=1$, and so the code is not $m$-orthogonal.
\end{proof}

\subsection{The smallest \texorpdfstring{$k$}{k}-orthogonal Code}\label{subsec:smallestkorthog}
We will now prove the minimal size relation for these codes. Later, in theorem \ref{thm:korthog} we will show that $k$-orthogonality is a necessary condition for certain transversal phase gate implementations, thus giving some upper bounds on the minimal size of a code transverally implementing these gates. 
 
\begin{theorem}\label{lem:smallestkorth}
The smallest $k$-orthogonal code contains at least $2^{k+1}-1$ bits.
\end{theorem}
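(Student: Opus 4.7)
My plan is to reformulate the $k$-orthogonality conditions as $\mathbb{F}_2$-linear constraints on a column-indicator vector and then to extract the lower bound from the known minimum distance of a Reed-Muller dual code.

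First I would set up the encoding. By Lemma~\ref{lem:korthogonal} we may assume $m\geq k+1$. Label each column of $A_X$ by the subset $T\subseteq[m]$ of rows in which it has a $1$; distance $\geq 3$ together with non-degeneracy forces these labels to be distinct and non-empty, so the columns correspond to a set $C\subseteq 2^{[m]}\setminus\{\emptyset\}$ of size $n$. Define $f:2^{[m]}\to\mathbb{F}_2$ by $f(T)=1$ iff $T\in C$, together with $f(\emptyset)=0$; the Hamming weight of $f$ is then $n$. For any $S=\{i_1,\ldots,i_j\}\subseteq[m]$, the product weight $|x^{i_1}\cdots x^{i_j}|$ counts precisely the columns whose row-support contains $S$, i.e.\ $\sum_{T\supseteq S}f(T)=\langle v_S,f\rangle$ over $\mathbb{F}_2$, where $v_S(T)=1$ iff $T\supseteq S$. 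Because $|\cdot|$ is $\mathbb{F}_2$-multilinear modulo $2$ (from $x\oplus y=x+y-2xy$), $k$-orthogonality on all of $S_X$ is equivalent to the same condition on the generating rows of $A_X$, so our hypothesis is exactly that $\langle v_S,f\rangle=0$ for every $S$ with $1\leq|S|\leq k$.

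Next I would invoke Reed-Muller. Viewing $T\in 2^{[m]}$ as a point of $\mathbb{F}_2^m$, $v_S$ is the evaluation of the degree-$|S|$ monomial $\prod_{i\in S}y_i$, and $\operatorname{span}_{\mathbb{F}_2}\{v_S : 0\leq|S|\leq k\}=RM(k,m)$. By Reed-Muller duality $RM(k,m)^\perp=RM(m-k-1,m)$, which has minimum distance $2^{k+1}$ and is well-defined because $m\geq k+1$. The only constraint of $RM(m-k-1,m)$ not already supplied by our hypotheses is orthogonality to the constant $v_\emptyset$, which is just a parity condition on $n$.

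Finally I would split on the parity of $n$. If $n$ is even then $\langle v_\emptyset,f\rangle=n\equiv 0$ automatically, so $f\in RM(m-k-1,m)$, giving $n\geq 2^{k+1}$. If $n$ is odd, set $f'=f+e_\emptyset$, where $e_\emptyset$ is the indicator of $T=\emptyset$; since $v_S(\emptyset)=0$ for every $|S|\geq 1$, the relations $\langle v_S,f'\rangle=0$ survive for $1\leq|S|\leq k$, while $\langle v_\emptyset,f'\rangle=n+1\equiv 0$, so $f'\in RM(m-k-1,m)$. Because $f$ is nonzero and $f,f'$ differ only at the $\emptyset$-entry, $f'$ is also nonzero, hence $n+1=\mathrm{wt}(f')\geq 2^{k+1}$, i.e.\ $n\geq 2^{k+1}-1$. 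Either way we obtain the claimed bound, already tight in the odd case --- matching the Hamming code construction promised in the next subsection. I expect the main conceptual obstacle to be recognizing the Reed-Muller structure; once the translation is in place, the rest follows directly from the well-known minimum distance, with the $e_\emptyset$ trick handling the single missing degree-zero constraint.
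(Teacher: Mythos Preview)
Your argument is correct and takes a genuinely different route from the paper. The paper proceeds constructively: it picks a maximal $p\geq k$ for which the code remains $p$-orthogonal, starts from a single all-ones column forced by the failure at level $p{+}1$, and then argues level-by-level (illustrated by the $p=3$ example) that each lower-order orthogonality condition forces the addition of a fresh block of $\binom{p+1}{j}$ columns, summing to $2^{p+1}-1$; the extremal configuration is thereby exhibited as the Hamming parity-check matrix. Your proof instead packages the columns as an indicator $f$ on $2^{[m]}$, recognises the $k$-orthogonality constraints as orthogonality to all nonconstant monomials of degree $\leq k$, and reads off the bound from the minimum distance $2^{k+1}$ of $RM(m-k-1,m)=RM(k,m)^{\perp}$, patching the missing degree-zero constraint with the $e_\emptyset$ trick.

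The trade-offs are clear. Your approach is shorter and more rigorous once the standard Reed--Muller facts are granted; it also explains \emph{why} $2^{k+1}-1$ appears, as a Reed--Muller minimum distance rather than a binomial coincidence, and it automatically gives the stronger bound $n\geq 2^{k+1}$ when $n$ is even. The paper's approach, by contrast, is self-contained (no appeal to Reed--Muller duality or minimum distance) and simultaneously \emph{constructs} the extremal code, which your argument only alludes to. One small remark: you lean on Lemma~\ref{lem:korthogonal} for $m\geq k+1$, which is fine here, but note that your Reed--Muller identification in fact re-derives that lemma as the case where $RM(m-k-1,m)$ would otherwise be the zero code.
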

\begin{proof}
Assume our code is $p$-orthogonal ($k\leq p< m$) but not $(p+1)$ orthogonal. This means that there is a choice of $(p+1)$ rows such that $|x^1\cdot x^2\cdot\ldots\cdot x^{p+1}|$ is odd. By Lemma \ref{lem:korthogonal}, such a $p$ exists. We choose to arrange $A_X$ so that these $(p+1)$ rows are the first $(p+1)$ rows of $A_X$. Thus, $N_1\equiv N_2\equiv N_3\equiv\ldots\equiv N_p\equiv 0\text{ mod }2$ and $N_{p+1}\equiv 1\text{ mod }2$.

We now try to construct a specific instance of these $(p+1)$ rows of the code. An odd number of columns are the all-ones (identified by $x_1\cdot x_2\cdot x_3\ldots x_{p+1}$). Since we want the minimal size case, let us take just a single column. Next, consider removing any one row. Applying $p$-orthogonality, there is at least one more column that is all-ones in the remaining columns. But we can repeat this for every row that we remove. Hence, we have a structure that looks like
$$
\begin{array}{c|cccc}
1 & 1 & 1 & 1 & 0 \\
1 & 1 & 1 & 0 & 1 \\
1 & 1 & 0 & 1 & 1 \\
1 & 0 & 1 & 1 & 1
\end{array}
$$
This example has all triple products with even weight, but has not yet considered the double products. These, currently, are all odd-valued. So, we'll have to add extra columns. To make those work, we end up adding
$$
\begin{array}{c|cccc|cccccc}
1 & 1 & 1 & 1 & 0 & 0 & 0 & 0 & 1 & 1 & 1 \\
1 & 1 & 1 & 0 & 1 & 0 & 1 & 1 & 0 & 0 & 1\\
1 & 1 & 0 & 1 & 1 & 1 & 0 & 1 & 0 & 1 & 0\\
1 & 0 & 1 & 1 & 1 & 1 & 1 & 0 & 1 & 0 & 0
\end{array}.
$$
But this still leaves the weight of individual rows odd. We have to add yet more columns
$$
\begin{array}{c|cccc|cccccc|cccc}
1 & 1 & 1 & 1 & 0 & 0 & 0 & 0 & 1 & 1 & 1 & 1 & 0 & 0 & 0 \\
1 & 1 & 1 & 0 & 1 & 0 & 1 & 1 & 0 & 0 & 1 & 0 & 1 & 0 & 0\\
1 & 1 & 0 & 1 & 1 & 1 & 0 & 1 & 0 & 1 & 0 & 0 & 0 & 1 & 0\\
1 & 0 & 1 & 1 & 1 & 1 & 1 & 0 & 1 & 0 & 0 & 0 & 0 & 0 & 1
\end{array}.
$$
In all, we end up with the minimal size case with 
$$n={p \choose p} + {p \choose p-1} + \cdots + {p \choose 2} + {p \choose 1} = 2^{p+1}-1$$ So, for a $k$-orthogonal code we require $n\geq2^{k+1}-1$. This construction gives us exactly the limiting cases of these codes, and they coincide with the parity check matrices of the Hamming $[2^{k+1}-1,2^{k+1}-2-k,3]$ codes.
\end{proof}

\subsection{The sub-dual Hamming codes}\label{subec:subdual}
We will ultimately be considering transversal gates on Quantum Error Correcting Codes based on the $k$-orthogonal codes. 

\begin{lemma}
	There exists a $2^{k+1}-1$ qubit CSS code which is $k$-orthogonal.
\end{lemma}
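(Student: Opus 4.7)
The plan is to build the CSS code directly from the Hamming parity-check matrix already exhibited in Theorem \ref{lem:smallestkorth}. I would take $A_X := H$, the $(k+1)\times(2^{k+1}-1)$ matrix whose columns enumerate every nonzero string in $\{0,1\}^{k+1}$. By the explicit construction in the proof of that theorem, the rowspan $S_X$ of $H$ satisfies every dot-product condition defining $k$-orthogonality. Since $k$-orthogonality is a property only of $A_X$, this choice already discharges the $k$-orthogonality requirement before $A_Z$ is even specified.

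The second step is to produce a $Z$-type generator $A_Z$ making $(A_X,A_Z)$ into a legitimate CSS stabilizer code, i.e., ensuring $A_X A_Z^T \equiv 0 \bmod 2$. This is exactly the condition that each row of $A_Z$ lie in $\ker H$, which is the classical Hamming $[2^{k+1}-1,2^{k+1}-2-k,3]$ code. The most symmetric universal choice is $A_Z := H$ itself, mirroring the Steane construction. Verifying that this is legal reduces to a short count: $(H H^T)_{rs}$ is the number of nonzero elements of $\{0,1\}^{k+1}$ carrying a $1$ in both positions $r$ and $s$, equal to $2^k$ when $r = s$ and $2^{k-1}$ otherwise. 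Both quantities are even once $k \geq 2$, so $H H^T \equiv 0 \bmod 2$ and the stabilizer generators commute.

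The edge case $k = 1$ needs a bit of care, since $2^{k-1} = 1$ is odd and the rows of $H$ are no longer mutually orthogonal mod 2. Here I would instead take $A_Z = (1,1,1)$, the unique nonzero element of $\ker H$ for $n = 3$, which yields a valid (if highly degenerate) three-qubit CSS code. The main obstacle anywhere in the argument is the single parity computation on $HH^T$; once that is in hand, $k$-orthogonality of the resulting CSS code is inherited from $A_X$ with no further work, which is all the existence statement demands.
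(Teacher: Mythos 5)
Your route is close to the paper's but makes a genuinely different choice at the key step. Both proofs take $A_X = H$, the $(k+1)\times(2^{k+1}-1)$ Hamming parity-check matrix, and both inherit $k$-orthogonality from Theorem~\ref{lem:smallestkorth} since the definition depends only on $A_X$. Where you diverge is in $A_Z$: you take $A_Z = H$ (the Steane-like, self-dual choice), and your parity count $(HH^T)_{rs}\in\{2^k, 2^{k-1}\}$, even for $k\geq 2$, correctly establishes commutation. The paper instead takes $A_Z$ to be the \emph{entire} null space of $\mathrm{rowspan}(H)\cup\{(1,\dots,1)\}$, which has dimension $2^{k+1}-k-3$ and contains $H$. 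The consequence is the number of logical qubits: the paper's sub-dual code encodes exactly one (with $X_L = X^{\otimes n}$, $Z_L = Z^{\otimes n}$), whereas yours encodes $n - 2(k+1) = 2^{k+1}-2k-3$, which is $1$ only at $k=2$ and grows rapidly (e.g.\ $7$ for $k=3$). Both satisfy the lemma as literally stated, and your $HH^T$ computation is a clean way to certify the CSS condition; but the paper's larger $A_Z$ is deliberate, since its downstream use (Assumption~3, and the transversal $P(\pi/2^{k-1})$ claim for the sub-dual Hamming codes) requires a single-logical-qubit code, and the extra $Z$-stabilizers also buy the enhanced $X$-distance $2^k-1$ shown in Appendix~\ref{appendixA}. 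One small flaw in your proposal: the $k=1$ fallback $A_Z=(1,1,1)$ gives $3-2-1=0$ logical qubits, i.e.\ a stabilizer state rather than a code; this is harmless only because the paper explicitly restricts the construction to $k\geq 2$.
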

The nice thing about having CSS codes is that it guarantees we have transversal controlled-not as well, maximising the sets of gates we can use.

For odd $n$, if we take an $A_X$ that is $k$-orthogonal ($k\geq 2$), we can construct a CSS quantum code for it by fixing $s=111\dots 1=r$ and finding the $A_Z$ that is the null space of $A_X$ and $r$ (this is just the way that tri-orthogonal codes have previously been constructed \cite{nezami2022}). Then, due to bi-orthogonality, $A_X$ is in that null space and hence $A_X\subseteq A_Z$.

Take the $[2^{k+1}-1,2^{k+1}-2-k,3]$ classical Hamming code. Use the parity check of this as $A_X$ in a CSS code. Impose that the quantum code will encode a single qubit and have $X_L=X^{\otimes n}$ and $Z_L=Z^{\otimes n}$. This basically fixes $A_Z$, using the null space of $A_X$ combined with the logical operator.

In appendix \ref{appendixA} we will show that while the code is of distance $3$ with respect to $Z$ errors, it is of distance $2^{k}-1$ with respect to $X$ errors. This additional distance can prove beneficial in some scenarios. 

\section{Initial Assumptions}\label{sec:assumptions}

There are several assumptions that we will make that strongly influence the properties of the code that we will choose. We list them here carefully for clarity.

\begin{assumption}
The set of logical operators with transversal implementation will include a phase gate $P(\theta)$ other than $S,Z,S^\dagger$.
\end{assumption}
By construction, our code in standard form will have a transversal phase gate. We need one gate that is outside of the Clifford group, and it is this one that we are choosing to select (any other choice would be represented in this form via conversion to standard form).
\begin{assumption}\label{assumption}
The transversal implementation of that gate must only comprise physical application of that gate (and its powers).
\end{assumption}
Consider a transversal gate at the top level of concatenation. We require the transversality to propagate down through all layers of the hierarchy. This means that every gate we use must have a transversal implementation at all levels. For example if our code is equipped with transversal $T$ but we need Steane's \cite{steane1997a} or Shor's \cite{shor1997a} error correction procedure to obtain $H$, then although we would regard a logical gate which could be implemented at level-$1$ of our code as $T\otimes H\otimes T$ as transversal, at the second level of concatenation its implementation requires a level-$1$ $H$
gate which would not be transversal. This assumption ensures transversality on every level of concatenation.
\begin{assumption}
The codes that we consider will contain a single logical qubit.
\end{assumption}
This is in no way limiting --- if there is a code that encodes more than one qubit, one of which has transversal $T$, we just consider a new code that is the same as the original, but where the logical operators of all the other qubits now become stabilizers. The code distance cannot decrease. This code has a single logical qubit with transversal $T$.

% \begin{assumption}
% The set of transversal operators for the code will include a two-qubit gate.
% \end{assumption}
% The two-qubit gate is already (necessarily) straining the threshold optimisation conditions by working on more qubits. Thus, we must keep its gate application as simple as possible --- transversal.
% \begin{assumption}
% There will be a single non-transversal gate in the universal gate set, which is a single-qubit gate in 0th level of the Gottesman-Chuang hierarchy.
% \end{assumption}
\section{The smallest codes with transversal $\frac{\pi}{2^{k-1}}$ phase gates}\label{sec:smallestphase}
We are now ready to link the previous results. So far we have proved that the minimum size of a classical $k$-orthogonal code is $2^{k+1}-1$ bits, and that for each $k$ there exists a CSS $k$-orthogonal code which is $2^{k+1}-1$ qubits long. Below we will start by looking into how a general transversal phase gate will look. We then use this to prove that $k$-orthogonality on a subset of the code is a necessary condition for a transversal phase gate, thus concluding that the minimum size for a code with transversal  $\frac{\pi}{2^{k-1}}$ phase is $2^{k+1}-1$ qubits as well. We then generalise the result to account for controlled-phase gates, as well.

\begin{theorem}
A transversal phase gate $P(\theta)$ for any non-degenerate stabilizer code of distance $d\geq 3$ must have a phase $\theta=p\frac{\pi}{2^{k-1}}$ for integers $p,k$ and $k\leq m-1$.
\end{theorem}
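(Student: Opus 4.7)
The plan is to use Assumption~\ref{assumption} together with the requirement that $U$ preserves the code subspace and implements logical $P(\theta)$. After the standard-form reduction one can take $U=\bigotimes_{j=1}^n P(n_j\theta)$ with integers $n_j$, and work with a CSS representative where $|\bar 0\rangle\propto\sum_{x\in S_X}|x\rangle$ and $|\bar 1\rangle\propto\sum_{x\in S_X}|x\oplus s\rangle$, with $S_X$ the row span of $A_X$ and $X_L=X_s$. Requiring $U|\bar 0\rangle=|\bar 0\rangle$ yields
\begin{equation*}
\theta(n\cdot x)\in 2\pi\mathbb{Z}\quad\text{for every }x\in S_X,
\end{equation*}
where $n\cdot y:=\sum_j n_j y_j$ is the integer-valued dot product, while $U|\bar 1\rangle=e^{i\theta}|\bar 1\rangle$ gives the companion constraint $\theta(n\cdot s-1)\in 2\pi\mathbb{Z}$.

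Next I propagate the first constraint through the inclusion-exclusion identity
\begin{equation*}
n\cdot(r_{i_1}\oplus\cdots\oplus r_{i_\ell})=\sum_{\emptyset\ne T\subseteq\{i_1,\ldots,i_\ell\}}(-2)^{|T|-1}(n\cdot r_T),\qquad r_T:=\prod_{i\in T}r_i.
\end{equation*}
A straightforward induction on $\ell$ — isolating the top-level term $(-2)^{\ell-1}\theta(n\cdot r_T)$ after substituting the inductively known lower-order constraints — establishes $\theta(n\cdot r_T)\in(\pi/2^{|T|-2})\mathbb{Z}$ for every $T\subseteq[m]$. Taking $|T|=m$ and applying the row-reduction trick from the proof of Lemma~\ref{lem:korthogonal}, for any column $q$ one can choose generators of $A_X$ so that $r_{[m]}=e_q$; this yields $\theta n_q\in(\pi/2^{m-2})\mathbb{Z}$ for every qubit $q$.

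Writing $\theta=\pi c/d$ in lowest terms and decomposing $d=2^ab$ with $b$ odd, the bound $\theta n_q\in(\pi/2^{m-2})\mathbb{Z}$ reads $d\mid c\,n_q\,2^{m-2}$; since $\gcd(c,d)=1$, it forces $b\mid n_q$ for every $q$ and hence $b\mid n\cdot s$. The companion constraint $\theta(n\cdot s-1)\in 2\pi\mathbb{Z}$ likewise gives $b\mid n\cdot s-1$, so $b=1$ and $d$ is a power of two, yielding $\theta=p\pi/2^{k-1}$ with $k=a+1$ and $p$ odd (WLOG). For the bound $k\leq m-1$: if instead $k\geq m$ then $2^a\mid n_q\,2^{m-2}$ forces each $n_q$ to be even, making $n\cdot s$ even and $n\cdot s-1$ odd; but $\theta(n\cdot s-1)\in 2\pi\mathbb{Z}$ with $p$ odd demands $2^k\mid n\cdot s-1$, a contradiction. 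Hence $k\leq m-1$. The main obstacle is assembling the inclusion-exclusion hierarchy alongside the row-reduction trick so that the full constraint ladder propagates cleanly down to constraints on individual qubits; once that is done the dyadic identification and the $k\leq m-1$ bound reduce to $2$-adic bookkeeping.
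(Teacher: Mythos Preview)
Your core argument---the inclusion--exclusion hierarchy combined with the row-reduction trick from Lemma~\ref{lem:korthogonal} to isolate individual columns---is precisely what the paper does. There are, however, two points of divergence.

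First, you invoke Assumption~\ref{assumption} to write $U=\bigotimes_j P(n_j\theta)$, whereas the paper's proof does not: it allows arbitrary physical phases $\theta_i$ on each qubit and deduces directly that $2^{m-1}\theta_i\equiv 0\bmod 2\pi$, i.e.\ $\theta_i\in(\pi/2^{m-2})\mathbb{Z}$. The logical phase is then simply $\theta=\sum_i s_i\theta_i\in(\pi/2^{m-2})\mathbb{Z}$, and the theorem follows in one line. Your parametrisation $\theta_i=n_i\theta$ forces you into the extra $2$-adic bookkeeping (splitting $d=2^ab$, showing $b\mid n\cdot s$ and $b\mid n\cdot s-1$ to kill the odd part, then arguing $k\le m-1$ via the parity of the $n_q$). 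All of this is correct, but it is work created by an assumption the paper explicitly defers until \emph{after} this theorem; without it the conclusion is immediate once each $\theta_i$ is pinned down.

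Second, your appeal to a ``CSS representative'' with $\ket{\bar 0}\propto\sum_{x\in S_X}\ket{x}$ is not generally valid: not every stabilizer code is locally equivalent to a CSS code. Fortunately your constraints survive unchanged in the paper's standard form, where $\ket{0_L}=\sum_{x\in S_X}\tilde s_x\ket{x}$ with possibly nontrivial phases $\tilde s_x$; since $U$ is diagonal, comparing coefficients still gives $\theta(n\cdot x)\in 2\pi\mathbb{Z}$ regardless of those phases. So this is an imprecision rather than a fatal gap, but you should replace the CSS reduction with the standard-form argument.
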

Already proven in \cite{anderson2016}, we will give a proof based on our notation and for our case only.
\begin{proof}
We construct the projector onto logical 0, and project onto the all-zeros state\footnote{This only works because of our chosen standard form.}. Up to normalisation, this defines
$$
\ket{0_L}=(\identity+Z_L)\prod_{(x,y)\in A_XB}(\identity+s_{xy}X_xZ_y)\prod_{z\in A_z}(\identity+Z_z)\ket{0}^{\otimes n},
$$
where $s_{xy}$ is a possible value $\pm1,\pm i$, which is not important for our purposes. The $Z$-only terms immediately vanish, leaving
$$
\ket{0_L}=\prod_{(x,y)\in A_XB}(\identity+s_{xy}X_xZ_y)\ket{0}^{\otimes n}=\frac{1}{\sqrt{2^k}}\sum_{x\in S_x}\tilde s_x\ket{x},
$$
where $S_x$ is the group of strings generated by $A_X$ and $\tilde s_x$ is an updated phase value. The important feature of our choice of $A_X$ being full rank is that it assures that there are no products of terms $X_xZ_y$ that all yield the same $X$ components, which would potentially annihilate different terms.

Let us apply phase gates $P(\theta_i)$ to each qubit. If this is to achieve logical $T$, we require that
$$
\left(\bigotimes_iP(\theta_i)\right)\ket{x}=e^{i\gamma}\ket{x}
$$
for all $x\in S_x$. Since $000\ldots 0\in S_x$, it must be that $\gamma=0$. Hence, we require that
$$
\sum_{i=1}^nx_i\theta_i=|x\cdot\vec{\theta}|\equiv 0\text{ mod }2\pi
$$
for all $x_i\in S_X$.

Now, consider any two $x^1,x^2\in S_x$. It is also the case that $x^1\oplus x^2\in S_x$. However, we can rewrite
$$
|(x^1\oplus x^2)\cdot\vec{\theta}|=|x^1\cdot\vec{\theta}|+|x^2\cdot\vec{\theta}|-2|(x^1\cdot x^2)\cdot\vec{\theta}|.
$$
This means that
$$
2|(x^1\cdot x^2)\cdot\vec{\theta}|\equiv 0\text{ mod }2\pi.
$$
By induction, we have
$$
2^{m-1}|(x^1\cdot x^2\cdot\ldots\cdot x^m)\cdot\vec{\theta}|\equiv 0\text{ mod }2\pi.
$$

Now, however, let us use the fact that the code is distance 3 and non-degenerate. By the construction of Lemma \ref{lem:korthogonal}, select a single site $i$. The error $Z_i$ must have a unique non-trivial syndrome. We can convert this into a set of $m$ rows (replacing the previous definition of $A_X$) such that every element in column $i$ is 1. Since the code is non-degenerate, it must be the only column for which this is true. Hence, if we use these rows to calculate $x^1\cdot x^2\cdot\ldots\cdot x^m$, we isolate just the column $i$. Thus,
$$
2^{m-1}\theta_i\equiv 0\text{ mod }2\pi
$$
for all $i$. This fixes the form of the phase to be $\frac{p_i\pi}{2^{m-2}}$.

Moreover, $\ket{1_L}=X_s\ket{0_L}$. This means that the net phase which is applied to this logical state is
$$
\theta=\frac{\pi}{2^{m-2}}\sum_{i=1}^np_is_i,
$$
limiting the form of the logical $P(\theta)$ that can be implemented.
\end{proof}

\emph{Technically}, this result could, for example, allow us to create a logical $\pi/4$ gate by transversal application of $\pi/8$ gates for any code where $m>k+1$. While it seems likely that the minimal case will keep $m$ as small as possible, and hence $m=k+1$, we nevertheless explicitly impose via assumption \ref{assumption} that this does not happen since it would not permit transversal implementation throughout the concatenated hierarchy.

\begin{lemma}
For any degenerate code, a subset of qubits can be chosen from whose perspective the code is a non-degenerate code of the same distance, and hence the same conclusions hold.
\end{lemma}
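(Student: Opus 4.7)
The plan is to reduce the degenerate case to the non-degenerate one by identifying ``redundant'' qubits (those whose $A_X$ columns coincide with another's) and lumping their phase contributions onto a chosen representative, after which the argument of the previous theorem applies verbatim. Concretely, define $i\sim j$ iff $Z_iZ_j$ lies in the normalizer of the stabilizer group, equivalently iff columns $i$ and $j$ of $A_X$ are equal (a basis-independent condition, since equality of columns is preserved under row operations on $A_X$). Because the code has distance $d\geq 3$, any such weight-$2$ normalizer element cannot be a non-trivial logical operator, so it must belong to the stabilizer group $\mathcal{S}$. Let $S\subset\{1,\ldots,n\}$ consist of one representative from each equivalence class.

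Two invariants follow. First, every $x\in S_X$ is constant on each class: since rows of $A_X$ have equal entries at positions $i$ and $j$ whenever $i\sim j$, so does every element of the row span. Second, the support $s$ of $X_L=X_s$ is constant on each class, because $X_s$ must commute with the stabilizer $Z_iZ_j$, forcing $s_i=s_j$. Defining lumped phases $\theta'_i:=\sum_{j:\,j\sim i}\theta_j$ for representatives $i\in S$, these invariants give $x\cdot\vec\theta=x|_S\cdot\vec{\theta'}$ for every $x\in S_X$ and $s\cdot\vec\theta=s|_S\cdot\vec{\theta'}$, so the restricted gate $\bigotimes_{i\in S}P(\theta'_i)$ reproduces the induced logical phase $\theta$ on every codeword.

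The argument of the previous theorem can then be replayed with $A_X^{\mathrm{restr}}$ (the restriction of $A_X$ to columns in $S$) in place of $A_X$. Its rows remain linearly independent, so rank is preserved at $m$, and no two of its columns coincide by construction, so the non-degeneracy condition invoked at the critical step of that proof is satisfied. The same induction $2^{m-1}|(x^1\cdot\ldots\cdot x^m)\cdot\vec{\theta'}|\equiv 0\bmod 2\pi$ then isolates a single column via row reduction on $A_X^{\mathrm{restr}}$, yielding $2^{m-1}\theta'_i\equiv 0\bmod 2\pi$ and hence $\theta'_i=p'_i\pi/2^{m-2}$ for each $i\in S$. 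Consequently $\theta=\sum_{i\in S}s_i\theta'_i$ takes the required form $\theta=p\pi/2^{k-1}$ with $k\leq m-1$, exactly the conclusion of the previous theorem.

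The main obstacle will be verifying that the previous theorem's proof genuinely depends only on the pairwise-distinct-columns property of $A_X$ (i.e.\ Z-error non-degeneracy), and not on any further global property of a full stabilizer code on $S$, because we never explicitly build a restricted code with its own $A_Z^{\mathrm{restr}}$ --- the ``code of the same distance'' phrasing in the lemma must be read in this weaker, $A_X$-centred sense. The distance-$3$ hypothesis is invoked exactly once, to promote $Z_iZ_j$ from a normalizer element to a stabilizer, and that is precisely what makes the constancy of $s$ on equivalence classes (and hence the entire lumping scheme) go through.
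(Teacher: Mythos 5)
Your proof is correct and takes essentially the same approach as the paper: both partition the qubits into classes with identical $A_X$ columns (equivalently, $Z$ errors with the same syndrome), lump the phases onto a single representative per class, and rerun the non-degenerate argument on the restricted matrix, whose columns are now pairwise distinct while the row rank and distance hypotheses still hold. You additionally spell out two steps the paper leaves implicit --- that $d\geq 3$ promotes $Z_iZ_j$ from a normalizer element to a stabilizer, and that this in turn forces the support $s$ of $X_L$ (and hence $r$) to be constant on each class so that the lumping actually reproduces the logical phase --- a welcome bit of rigour but not a different route.
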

\begin{proof}
A degenerate code is characterised by the fact that several single-qubit errors may have the same syndrome. We can divide qubits up into degeneracy sets $\Lambda_i$, meaning that each $\Lambda_i$ is the set of qubits $j$ for which the error $Z_j$ has the same syndrome. These disjoint sets are not affected by redefinitions of $A_X$, and it means that for any product $x_1\cdot x_2\cdot \ldots x_k$, the bit values $(x_1\cdot x_2\cdot \ldots x_k)_j$ are equal for all $j\in\Lambda_i$. As such, any value
$$
\sum_{i=1}^nx_i\theta_i
$$
is characterised only by the values $\Theta_i=\sum_{j\in\Lambda_i}\theta_j$. Hence, let us select $v\in\{0,1\}^n$ such that $v_j=1$ for exactly one member $j$ of each $\Lambda_i$. Without loss of generality, we can choose to apply the phase $\Theta_i$ to the corresponding qubit indicated by $v$, and 0 on all other members of $\Lambda_i$.

Now, the logical operation is $Z_r$ with $r\subseteq v$. As the columns $j$ of $A_X$ are all identical for all $j\in\Lambda_i$, but distinct for columns $j\in\Lambda_i,k\in\Lambda_{i'}$ which are members of different sets $i\neq i'$, the columns $r$ of $A_X$ are all distinct. Our previous proofs hold with respect to these columns; it acts like a non-degenerate code.
\end{proof}

\begin{theorem}\label{thm:korthog}
Any code that has transversal $P(\theta)$ where $\theta=\frac{p\pi}{2^{k-1}}$ ($p$ odd) must be $k$-orthogonal with respect to a subset of vertices $r$.
\end{theorem}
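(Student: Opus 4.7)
The plan is to extend the two-string identity of the previous theorem to an arbitrary number of strings, substitute the explicit form $\theta_i=p_i\pi/2^{m-2}$, and then read off $k$-orthogonality as a mod-$2$ consequence. First I would prove by induction on $j$ that for every $x^1,\ldots,x^j\in S_X$,
$$2^{j-1}\,\bigl|(x^1\cdot x^2\cdot\ldots\cdot x^j)\cdot\vec{\theta}\bigr|\equiv 0\bmod 2\pi.$$
The base case $j=1$ is just $|x\cdot\vec{\theta}|\equiv 0\bmod 2\pi$. For the step from $j$ to $j+1$, I apply the hypothesis to $x^1\oplus x^2,\,x^3,\ldots,x^{j+1}$ (using closure of $S_X$ under XOR), expand $(x^1\oplus x^2)\cdot x^3\cdots x^{j+1}=(x^1\cdot x^3\cdots x^{j+1})\oplus(x^2\cdot x^3\cdots x^{j+1})$, and collapse using $|a\oplus b|=|a|+|b|-2|a\cdot b|$; the two single-branch terms vanish by the inductive hypothesis, leaving $2^j\,|(x^1\cdots x^{j+1})\cdot\vec{\theta}|\equiv 0\bmod 2\pi$.

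I would then substitute $\theta_i=p_i\pi/2^{m-2}$ to convert this into the integer congruence
$$\sum_i(x^1\cdot x^2\cdot\ldots\cdot x^j)_i\,p_i\equiv 0\bmod 2^{m-j}.$$
Choosing $r\in\{0,1\}^n$ by $r_i\equiv p_i\bmod 2$ and reducing modulo~$2$ (valid whenever $m-j\geq 1$) gives
$$\bigl|(x^1\cdot x^2\cdot\ldots\cdot x^j)\cdot r\bigr|\equiv 0\bmod 2,$$
which is exactly the $k$-orthogonality condition at level $j$ restricted to the subset $r$. To cover all $j\leq k$ I need $k\leq m-1$; this follows from equating the logical phase $\pi/2^{m-2}\sum_i p_is_i$ with $p\pi/2^{k-1}$: since $p$ is odd, $\sum_i p_is_i=p\cdot 2^{m-k-1}$, forcing $m\geq k+1$.

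The main obstacle I anticipate is ensuring that the chosen $r$ is meaningful rather than trivially empty, and cleanly handling degenerate codes. Under assumption~\ref{assumption} each physical $\theta_i$ is an integer multiple of $\theta$, so $p_i=c_ip\cdot 2^{m-k-1}$ for integers $c_i$; the minimal-denominator choice $m=k+1$ gives $r_i\equiv c_i\bmod 2$, and the requirement that the aggregate logical phase be $P(\theta)$ (with $p$ odd) rather than a power of $P(2\theta)$ forces at least one $c_i$ with $s_i=1$ to be odd, so $r\neq 0$. For degenerate codes I would invoke the preceding lemma (which collapses each degeneracy class to a single representative) and use the phase-redistribution freedom exploited there to arrange $p_i\bmod 2$ to be constant across each class, so that the same $r$ descends to the non-degenerate restriction. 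Beyond these points the proof reduces to inclusion--exclusion bookkeeping and one mod-$2$ reduction.
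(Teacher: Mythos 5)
Your overall strategy matches the paper's: extend the inductive identity to $j$ strings, read off integer congruences, and reduce modulo $2$. But there are two places where the execution diverges from the paper, and one of them leaves a genuine gap.

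\textbf{Normalization of $\theta_i$.} You substitute $\theta_i=p_i\pi/2^{m-2}$ (the general form from the preceding theorem), whereas the paper invokes Assumption~\ref{assumption} directly to write $\theta_i=p_i\pi/2^{k-1}$ with $p_i=c_ip$. Your choice makes the final mod-$2$ reduction problematic: if $m>k+1$, all your $p_i=c_ip\cdot 2^{m-k-1}$ are even, so $r_i\equiv p_i\bmod 2$ gives $r=0$ and the $k$-orthogonality statement becomes vacuous. You notice this (``ensuring that the chosen $r$ is meaningful rather than trivially empty'') and respond by \emph{choosing} $m=k+1$ and replacing the definition with $r_i\equiv c_i\bmod 2$, but you cannot choose $m$; it is a property of the code. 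The clean repair, which is implicitly what the paper does, is to normalize by $\pi/2^{k-1}$ from the start: then $\sum_i(\cdots)_ic_ip\equiv 0\bmod 2^{k-q+1}$, which (since $p$ is odd) gives $\sum_i(\cdots)_ic_i\equiv 0\bmod 2$ for $q\leq k$, and you set $r_i\equiv c_i\bmod 2$ uniformly, with no dependence on $m$. Your inequality $m\geq k+1$ is correctly derived but becomes irrelevant once the right normalization is used.

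\textbf{Identification of $r$.} The paper makes a short observation you omit entirely: applying the transversal gate $2^{k-1}$ times yields, logically, $P(p\pi)=Z^p$, hence logical $Z$ (as $p$ is odd); physically it yields $\prod_i Z^{p_i}=Z_{p\bmod 2}$, so $p\bmod 2$ must be a representative of $Z_L=Z_r$ (up to multiplication by a $Z$-stabilizer, as the paper's footnote notes). This is what makes the ``subset $r$'' in the theorem statement concretely the support of a logical $Z$ operator --- it is nonempty for free, and it is the thing the downstream minimum-size argument operates on. Your alternative nonemptiness argument (``the aggregate logical phase forces at least one $c_i$ with $s_i=1$ to be odd'') is correct as far as it goes, but it establishes only $r\neq 0$, not that $r$ is a logical-$Z$ support, which is the meaningful content.

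Two smaller points: the inductive extension to $j$ strings is already proved inside the preceding theorem, so re-deriving it here is redundant rather than wrong; and your deferral of the degenerate case to the preceding lemma is consistent with the paper, which treats that case separately.
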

Tri-orthogonality was observed in \cite{rengaswamy2020} for $T$ under stronger assumptions (strongly transversal assumed). Here, we obtain a stronger result by weakening the conditions, which coincides when $m=4$.
\begin{proof}
By Assumption \ref{assumption}, we take
$$
\theta_i=\frac{p_i\pi}{2^{k-1}}
$$
Hence, we know that
$$
\sum_i(x_1\cdot x_2\cdot \ldots \cdot x_q)_ip_i\equiv 0\text{ mod }2^{k-q+1}
$$
for all $q=1$ to $k$. Thus, it is also true that
$$
\sum_i(x_1\cdot x_2\cdot \ldots \cdot x_q)_ip_i\equiv 0\text{ mod }2
$$
provided $q\leq k$.

Next, let us observe that if we repeat our gate $2^{k-1}$ times, we must implement logical $Z$. In other words, $p_i\equiv r_i\text{ mod }2$\footnote{Technically, the set of qubits acted upon might not be exactly the set identified by $Z_r$, but this is only because we can redefine $Z_L\mapsto Z_LZ_z$ for any stabilizer $Z_z$.} Hence, we see that $|x_1\cdot x_2\cdot \ldots \cdot x_{q}\cdot r|\equiv 0\text{ mod }2$ for all possible $q$-wise products of strings in $S_X$ for $q=1$ to $k$: $k$-orthogonality with respect to the subset r.%For the gate to be strongly transversal, $v=11\ldots 1$, recovering the standard statements of tri-orthogonality etc.
\end{proof}

In order to have a code with transversal $T$, we set $k=3$. The code must be tri-orthogonal. The smallest tri-orthogonal distance 3 code has 15 qubits, and is the well known code from \cite{bravyi2012}.

The following lemma provides a more general result.

\begin{lemma}
A code in standard form has transversal controlled-phase gates with $q$ control qubits and a target phase $P(p\pi/2^{k-q-1})$ only if the code is $k$-orthogonal.
\end{lemma}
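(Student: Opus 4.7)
The plan is to mirror the proof of Theorem \ref{thm:korthog}, now with $q+1$ copies of the code (one per control qubit plus one for the target). The key observation is that a $q$-controlled phase gate acts trivially on any input in which some control is $|0\rangle$, so in particular on $|0_L\rangle^{\otimes(q+1)}$. Expanding this state in the computational basis gives a superposition over $|y^{(1)},\ldots,y^{(q+1)}\rangle$ with $y^{(j)}\in S_x$. By Assumption \ref{assumption}, the physical gate at each site $i$ is a $q$-controlled phase with angle $\theta_i = p_i\pi/2^{k-q-1}$ acting on the $i$-th physical qubit of each block, so the basis element picks up phase $|y^{(1)}\cdot y^{(2)}\cdots y^{(q+1)}\cdot\vec\theta|$. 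Since the $y^{(j)}=0$ component has zero phase, every such phase must vanish modulo $2\pi$, giving
\[
\sum_i (y^{(1)}\cdots y^{(q+1)})_i\, p_i\equiv 0\bmod 2^{k-q}
\]
for all $y^{(j)}\in S_x$.

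I then extend this to the full range of product sizes required by the definition of $k$-orthogonality. For $j$-wise products with $j\leq q+1$, I use bitwise idempotency ($x\cdot x=x$): setting the last $q+2-j$ of the $y^{(\cdot)}$'s equal to a common string collapses the $(q+1)$-fold product into a $j$-fold product $x^1\cdot\ldots\cdot x^j$, preserving the same modular constraint. For $j\in[q+2,k]$, I iterate the same inclusion-exclusion identity
\[
|(a\oplus b)\cdot c\cdot\vec\theta|=|a\cdot c\cdot\vec\theta|+|b\cdot c\cdot\vec\theta|-2|a\cdot b\cdot c\cdot\vec\theta|
\]
used in Theorem \ref{thm:korthog}; each iteration adds one factor to the product and absorbs a factor of $2$, so after $j-q-1$ iterations the constraint becomes $\sum_i(x^1\cdots x^j)_i p_i\equiv 0\bmod 2^{k-j+1}$, which is at least $\bmod\,2$ whenever $j\leq k$.

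To identify $p\bmod 2$ with a concrete binary string, I apply the transversal gate $2^{k-q-1}$ times. This realises the logical $C^qZ^p=C^qZ$ (for $p$ odd) on the $q+1$ blocks, and physically amounts to placing a $q$-controlled $Z$ at each site $i$ with $p_i$ odd. The standard-form argument of Theorem \ref{thm:korthog}, together with the same footnote freedom to absorb $Z$-type stabilizers into the logical operator, identifies $p\bmod 2$ with the binary string $r$ supporting logical $Z$ on the target block. Consequently $|x^1\cdots x^j\cdot r|\equiv 0\bmod 2$ for every $j=1,\ldots,k$, which is precisely $k$-orthogonality with respect to the subset $r$.

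The main obstacle is the careful bookkeeping required to cover the entire range $1\leq j\leq k$ using the two different mechanisms (repetition below $q+1$, inclusion-exclusion above) and in checking that the identification of $p\bmod 2$ with $r$---which in the single-qubit case rested on a single $X_s$-twist on one block---still goes through once $q+1$ blocks are acted upon simultaneously. Degenerate codes reduce to the non-degenerate case by the same argument used before Theorem \ref{thm:korthog}.
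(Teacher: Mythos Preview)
Your proposal is correct and follows essentially the same route as the paper: act on $\ket{0_L}^{\otimes(q+1)}$ to obtain the constraint $\sum_i(x^1\cdots x^{q+1})_i p_i\equiv 0\bmod 2^{k-q}$, then extend downward (to $t\leq q$) by idempotency and upward (to $t\leq k$) by the inclusion--exclusion identity, and finally reduce modulo $2$ using the parities of the $p_i$. The only notable difference is that you attempt to identify $p\bmod 2$ with the logical-$Z$ support $r$ via the $2^{k-q-1}$-fold repetition argument, whereas the paper simply \emph{defines} $r$ to be the parity string of the $p_i$ and explicitly remarks that the connection to $Z_L=Z_r$ is not immediate in the multi-block setting.
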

This proves that every instance that has a transversal $T$ (for example) also has gates such as controlled-$S$ which would be helpful with gate synthesis. It also has controlled-controlled-phase, but we probably wouldn't include that due to its adverse effect on the fault-tolerant threshold. In turn, this deals with the question of what our two-qubit gate should be --- we are automatically provided with at least one with a transversal implementation (c-\textsc{not}) if we are dealing with CSS codes. Moreover, it proves that we're not missing a trick in the sense of asking for a hypothetical shorter code without transversal $T$, but with transversal controlled-$S$.
\begin{proof}
We assume a structure for the gate of
$$
U=\bigotimes_{i=1}^nc^q-P\left(\frac{p_i\pi}{2^{k-q-1}}\right).
$$
Consider the action on a state $\ket{0_L}^{\otimes(q+1)}$. Since $U$ is diagonal, it maps basis states to basis states. Thus, we require
\begin{equation}\label{eqn:leapfrog}
U\ket{x^1}\ket{x^2}\ldots\ket{x^{q+1}}=\ket{x^1}\ket{x^2}\ldots\ket{x^{q+1}}.
\end{equation}
In other words,
$$
2\pi\sum_i(x^1\cdot\ldots\cdot x^{q+1})_i\frac{p_i}{2^{k-q}}\equiv0\text{ mod }2\pi.
$$
Hence,
$$
\sum_i(x^1\cdot\ldots\cdot x^{q+1})_ip_i\equiv0\text{ mod }2^{k-q}.
$$
We can use our previous tricks to thus argue that
$$
\sum_i(x^1\cdot\ldots\cdot x^{t})_ip_i\equiv0\text{ mod }2^{k+1-t}.
$$
for $t=q+1$ to $k$.

Moreover, there is no need to keep the $x^i$ in Eq.\ (\ref{eqn:leapfrog}) distinct: if we set some set of them all equal, of course $x^1\cdot x^1\cdot x^1=x^1$, and we thus get equivalent statements for shorter sequences:
$$
\sum_i(x^1\cdot\ldots\cdot x^{t})_ip_i\equiv0\text{ mod }2^{k-q}.
$$
for all $t\leq q$.

This can be summarised as
$$
\sum_i(x^1\cdot\ldots\cdot x^{t})_ip_i\equiv0\text{ mod }2^{k-\max(q,t-1)}.
$$
Again, let $r$ contain the information on the parities of $p_i$ (it is not so immediate that this relates to $Z_L=Z_r$). Thus,
$$
|x^1\cdot\ldots x^t\cdot r|\equiv 0\text{ mod }2 \qquad t\leq k.
$$
The code must be $k$-orthogonal with respect to the subset of qubits $r$. Hence $n\geq |r|\geq 2^{k+1}-1$. There are no smaller cases of codes with transversal controlled-phases.
\end{proof}

\section{Conclusion}
We have proven that $k$-orthogonal codes are the minimum codes with a transversal $P({\frac{\pi}{2^{k+1}}})$ gate, thus extending previous results. We started by defining $k$-orthogonality, then looked into CSS codes constructed with $k$ orthogonal codes and showed that they are a subcase of the Hamming code constructions. We made sure to clearly identify the assumptions needed for the last step, where we proved that a transversal $P({\frac{\pi}{2^{k+1}}})$ phase gate induces a $k$-orthogonal structure on at least a subset of the code's vertices, thus showing that the minimum sizes are indeed the same.   

We focused on codes with transversal non-Clifford gates to avoid the commonly used costly distillation techniques, as by using a code with transversal phase gate, we only need to implement a Hadamard gate, whose state can be fault tolerantly prepared \cite{steane2004a, goto2016}. We also saw that the codes with transversal $P({\frac{\pi}{2^{k+1}}})$ phase will also have transversal controlled- $P({\frac{\pi}{2^{k}}})$ phase gates which can be beneficial during the synthesis stage.
 
It would be interesting to see how code distance affects the qubit sizes of our codes, as well as whether that could be beneficial for deriving higher thresholds.

\bibliographystyle{IEEEtran.bst}
\bibliography{MyLibrary1}

\appendix
\section{\texorpdfstring{$X$}{X}-error Distance}\label{appendixA} 
We want to investigate the distance of the sub-dual Hamming codes with respect to $X$ errors. We set
$$
A_X=\left(\begin{matrix} \identity & c & V \end{matrix}\right)
$$
where $c$ is a column of $m$ entries, two of which are non-zero. $V$ is a $n\times (n-m-1)$ matrix whose columns are all of the binary strings of weight 2 or more except for $c$. We then construct
$$
A_Z=\left(\begin{matrix} J & d & \identity \end{matrix}\right)
$$
where $d$ is an $n-m-1$ element column comprising the opposite of the parity of the weights of columns of $V$, and
$$
J=V^T+dc^T\text{ mod }2.
$$
One can readily verify that $A_Z\cdot A_X^T\equiv 0\text{ mod }2$, and that the row weights of $A_Z$ are even, guaranteeing commutation with $X^{\otimes n}$.

By construction, we know that the distance to $Z$ errors is 3: all columns of $A_X$ are distinct, so all single-qubit $Z$ errors have a distinct syndrome, and can hence be corrected, while there are sets of 3 columns that have a trivial sum (e.g.\ $c$ and the corresponding two columns from the $\identity$ term). What about the distance to $X$ errors? We want to find the minimum number columns of $A_Z$ that one can add together before getting a trivial outcome.

Clearly, adding distinct columns just arising from the $\identity$ matrix can never give a trivial output. On the other hand, all other columns have a weight of at least $2^{m-1}-2$: for column $i$ if $c_i=0$, then the weight is the same as the weight of row $i$ of $V$. But the total weight of row $i$ in $A_X$ is $2^{m-1}$, of which 1 is in the $\identity$ term, and, by assumption, the $c_i=0$. (We shall ignore the $c_i=1$ case for now --- it is more complicated without adding anything to the discussion and, after all, this only affects two columns).

As such, one of these columns from $J$ with $2^{m-1}-2$ columns from the identity matrix would give a trivial output. Hence, this suggests the distance could be $2^{m-1}-1$. What about arbitrary combinations of columns from $J$? Again, let us start by ignoring the two columns for which $c_i=1$. We know that the code is $k$-orthogonal ($k=m-1$). This means that all the linear combinations of rows of $A_X$ also have weight $2^{m-1}$. Thus, if we combine $q$ columns of $J$, the weight is $2^{m-1}-q$, and this does not change the count of how many columns are involved in a trivial combination.

Finally, let us consider the effect of those two columns (for which $c_i=1$) and $d$. We can easily verify that the weight of $d$ is the number of columns of $V$ with even weight, which is $2^{m-1}-2$ --- half of all binary strings have an even weight, but we discount $c$ and the all 0s string.

If an even number of these columns is included in our sum, then the occurrences of $d$ in each of them just cancel. In this case, the only difference is that because $c_i$ could be 1, the total weight of the column of $A$ is $2^{m-1}-2$, and the total weight of $q$ such columns combined is no less than $2^{m-1}-1-q$.
\end{document}